\documentclass[10pt,letterpaper,journal]{IEEEtran}

\usepackage[sort]{cite}

\usepackage[dvips]{graphicx}
\usepackage{psfrag}
\DeclareGraphicsExtensions{.eps}
\usepackage[caption=false,font=footnotesize]{subfig}
\usepackage{fixltx2e}
\usepackage{color}
\usepackage{pgfplots}
\pgfplotsset{compat=newest}
\pgfplotsset{plot coordinates/math parser=false}
\newlength\figureheight
\newlength\figurewidth
\usetikzlibrary{plotmarks}
\usepackage{morefloats}

\usepackage[cmex10]{amsmath}
\usepackage{amsfonts}
\interdisplaylinepenalty=2500
\usepackage{bbm}
\usepackage{upgreek}
\usepackage{amsmath,amssymb,amsfonts,amsbsy}
\usepackage{mathtools}
\usepackage{bbm}

\usepackage[lined,ruled,linesnumbered]{algorithm2e}
\usepackage{listings}


\usepackage{array}
\usepackage{acronym}
\usepackage{verbatim}




\newtheorem{proposition}{Proposition}

\begin{document}

\title{A Control-Theoretic Approach to Adaptive Video Streaming in Dense Wireless Networks}
\author{\IEEEauthorblockN{Konstantin Miller\IEEEauthorrefmark{1},
Dilip Bethanabhotla\IEEEauthorrefmark{2},
Giuseppe Caire\IEEEauthorrefmark{1}, and
Adam Wolisz\IEEEauthorrefmark{1}}\\
\IEEEauthorblockA{\IEEEauthorrefmark{1}Technische Universit\"at Berlin, Germany\\
Email: \{konstantin.miller, caire, adam.wolisz\}@tu-berlin.de}\\
\IEEEauthorblockA{\IEEEauthorrefmark{2}University of Southern California, Los Angeles, CA, USA\\
Email: bethanab@usc.edu }
}
\maketitle

\begin{abstract}
Recently, the way people consume video content has been undergoing a dramatic change. Plain TV sets, that have been the center of home entertainment for a long time, are losing grounds to Hybrid TV's, PC's, game consoles, and, more recently, mobile devices such as tablets and smartphones. The new predominant paradigm is: watch what I want, when I want, and where I want. 

The challenges of this shift are manifold. On the one hand, broadcast technologies such as DVB-T/C/S need to be extended or replaced by mechanisms supporting asynchronous viewing, such as IPTV and video streaming over best-effort networks, while remaining scalable to millions of users. On the other hand, the dramatic increase of wireless data traffic begins to stretch the capabilities of the existing wireless infrastructure to its limits. Finally, there is a challenge to video streaming technologies to cope with a high heterogeneity of end-user devices and dynamically changing network conditions, in particular in wireless and mobile networks. 

In the present work, our goal is to design an efficient system that supports a high number of unicast streaming sessions in a dense wireless access network. We address this goal by jointly considering the two problems of wireless transmission scheduling and video quality adaptation, using techniques inspired by the robustness and simplicity of \acf{PID} controllers. We show that the control-theoretic approach allows to efficiently utilize available wireless resources, providing high \ac{QoE} to a large number of users.
\end{abstract}

\section{Introduction}

Not so long ago, video content was largely distributed through a small number of broadcast television stations. Viewers passively consumed prescheduled content on a single available device - the television set. Things began to change in the seventies and eighties with the appearance of affordable video recorders that allowed to time-shift video programming, decoupling it from broadcasters' schedules. Video recorders also enabled consumption of prerecorded movies available through rental or sale. Meanwhile, few decades later, we are starting to observe a dramatic shift of video consumption from plain TV sets, that have been dominating home entertainment for half a century, to devices such as Hybrid TV's, personal computers, game consoles, and, more recently, mobile devices such as tablets and smartphones. This shift is accompanied by a new mindset: watch what I want, when I want, and where I want. 

The implications of this shift are manifold. Due to the asynchronicity of viewing patterns, broadcast-based distribution schemes need to be extended or replaced by multicast and unicast schemes, while the latter need to scale to millions of simultaneous viewers. In addition, an increasing fraction of the video content is distributed via wireless networks, including mobile networks, leading to a dramatic increase of wireless data traffic. Thus, traffic from wireless and mobile devices will exceed traffic from wired devices by 2018, accounting for 61 percent of the total Internet traffic. And by far the largest part of it is video. Globally, video traffic will be 79 percent of all consumer Internet traffic in 2018, up from 66 percent in 2013~\cite{CiscoVNI2014}. This development stretches the capabilities of the existing wireless infrastructure to its limits. Moreover, due to the extreme variability of network conditions on wireless and especially mobile networks, video streaming technologies are confronted with a challenge to deal with a high heterogeneity not only of end-user devices but also of dynamically changing network conditions~\cite{Mohan1999, Vetro2005a}.


It is well understood that the current trend of cellular technology (e.g., \ac{LTE}~\cite{Sesia2009}) cannot cope with the traffic increase caused by various new video services, unless the density of the deployed wireless infrastructure is increased correspondingly. In fact, throughout the history of wireless networks, throughput gains resulting from increased network density exceeded the gains from individual other factors by an order of magnitude~\cite{Chandrasekhar2008}. This motivates the recent flurry of research on massive and dense deployment of base station antennas, either in the form of "massive \ac{MIMO}" solutions (hundreds of antennas at each cell site)~\cite{Rusek2012} or in the form of very dense small-cell networks (multiple nested tiers of smaller and smaller cells, possibly operating at higher and higher carrier frequencies)~\cite{Chandrasekhar2008}. If supplied with sufficient storage capacity, these technologies can also help reducing the load on the backhaul (i.e., the wired network connecting the access network to the Internet), which have recently become a bottleneck in cellular networks~\cite{Golrezaei2013}.


At the same time, streaming over best-effort networks, that were not designed to provide stable \ac{QoS}, especially over wireless networks, makes it highly inefficient or entirely impossible
to use the same representation of a video for the duration of a streaming session. Instead, it must be adapted 
to dynamically varying network conditions such as throughput, packet loss rate, and delay jitter. 
A user might, e.g., experience continuous throughput fluctuations ranging from tens of kilobits to tens of megabits per second. 
Even a static user in an indoor residential or office \ac{WLAN} is typically exposed to interference, cross-traffic, and fading effects. The link quality fluctuations are even stronger in the case of mobile users. Thus, it is necessary to continuously adapt the \ac{QoS} requirements of the video in order to achieve a satisfactory \ac{QoE}. Recent studies suggest that these challenges have not yet been successfully addressed. In 2013, around 26.9\% of streaming sessions on the Internet experienced playback interruption due to rebuffering, 43.3\% were impacted by low resolution, and 4.8\% failed to start altogether~\cite{ConvivaReport2014}.

In the present study we focus on designing an efficient mechanism to support a high number of parallel streaming sessions in a wireless access network, such as a small-cell network%
. In order to maximize performance, we jointly consider the two problems of wireless transmission scheduling and video adaptation.
We assume \ac{HAS} as streaming technology~\cite{Stockhammer2011a, DASH2012}, which is a client-controlled adaptive streaming approach, that uses \ac{HTTP} and \ac{TCP} as its application layer and transport layer protocols, respectively, and that has been deployed with increasing success in various large-scale streaming solutions.


Inspired by the analytical tractability of \ac{PID} controllers, complemented by their powerful features, we aim at designing a joint transmission scheduling and video adaptation mechanism that resembles dynamics of a system under \ac{PID} control. A \ac{PID} controller is typically used to stabilize a dynamic system around a given target state. Its strength lies in the ability to do so in the presence of model uncertainties (that is, the system parameters are not completely known and might be time-varying) and disturbances (unknown, potentially random, inputs to the system). In our case, we use it to stabilize users' playback buffers around certain target values, in the presence of dynamically changing network conditions due to users arrival and departures, mobility and fading effects.


We evaluate the developed approach by means of simulations in different deployment scenarios, such as long-term users with low user churn, short-term users with high user churn, and a mix of short-term and long-term users. We use \ac{QoE}-related performance metrics such as total rebuffering time, start-up delay, average media bit rate, media bit rate fluctuations, and media bit rate fairness.

The rest of the paper is structured as follows. Section~\ref{sec:related_work} presents the related work. Section~\ref{sec:system_description} describes the streaming model and the wireless network model underlying our study. Section~\ref{sec:approach} presents our approach to joint scheduling and quality selection. 
Section~\ref{sec:evaluation} introduces the evaluation setting and results. Finally, Section~\ref{sec:conclusion} concludes the paper.

\section{Related work}\label{sec:related_work}

Over the past few years, adaptive video streaming has gained a lot of attention from the research community. The focus is typically on optimizing user's \ac{QoE}, complemented by a differently weighted mix of fairness, resource efficiency, energy and cost factors.

Many studies adopt the perspective of a streaming client, while modeling the network environment as a black box~\cite{Zhu2013, Liu2012, Evensen2012, MillerK2012, Jiang2012, Liu2012a, He2014}. In contrast to these studies, we jointly optimize video quality selection and network resource allocation in a dense wireless networks, in a distributed way. 


A number of studies specifically focus on video transmission over wireless networks, leveraging cross-layer techniques that jointly perform video quality selection and network resource allocation, for different types of wireless networks. In most such studies, video quality selection is performed in a centralized way~\cite{Guo2004, Djama2008, Cicalo2014, Chen2013b, Li2013b, Saki2015a, Hu2013}. In contrast, we assume a client-driven approach, where every streaming client performs adaptation individually and asynchronously w.r.t. other clients, which is inline with the \ac{HAS} streaming model~\cite{Stockhammer2011a, DASH2012}. Moreover, while these studies focus on a setting with a single base station, we consider a small-cell network~\cite{Chandrasekhar2008} with a dense base station deployment and a bandwidth reuse factor of 1. This setting is considered one of the candidate solutions to cope with the recently observed dramatic increase of wireless and mobile traffic.

Streaming technologies using unreliable transport protocols have received a lot of attention in the past~\cite{Djama2008, Cicalo2014, Saki2015a, Huang2009}. In contrast, our study assumes \ac{TCP} as transport protocol, which is reliable and thus requires a different \ac{QoE} model since, on the one hand, it prevents packet losses but, on the other hand, exhibits increased throughput fluctuations due to built-in congestion avoidance and congestion control mechanisms~\cite{Yan2012}. Moreover, due to \ac{TCP}'s end-to-end, connection oriented semantics, it does not allow in-network manipulation of packets belonging to a particular video stream.

Further studies focus on adapting the playout rate instead of video quality, in order to deal with dynamically varying network conditions~\cite{Li2006a}.


Using control-theoretic models in adaptive streaming is not a new idea. In~\cite{Zhu2013}, e.g., the authors design a video adaptation strategy which is based on a \ac{PI} controller, with the goal to combat video quality oscillations they observe when multiple video clients share a common bottleneck. In~\cite{Huang2009}, a controller with only a proportional component is used to stabilize a \ac{UDP}-based streaming system. Likewise, our work is inspired by the simplicity and power of a \ac{PID} controller. Differently, however, we treat both transmission scheduling in the wireless access network and video adaptation as a single system, which has the potential to further improve the performance. In addition, we deploy an anti-windup scheme to stabilize the controller despite of saturation, a different discretization approach, and several heuristics aiming at further improving users' \ac{QoE}.



Similar in spirit to our work, in~\cite{Bethanabhotla2015}, the authors address the goal of designing a joint transmission scheduling and video rate adaptation scheme using a \ac{NUM} framework, using the drift-plus-penalty method. The approach maximizes the sum of users' utilities which are functions of the long-term average video qualities. A drawback of this approach is the coupling of transmission scheduling slot duration and video segment size, requiring large start-up delays to combat buffer underruns.

Another hot and challenging topic is \ac{QoE} for adaptive video streaming, see e.g.~\cite{Yim2011, Narwaria2012a, Seufert2013, Reiter2014, Song2014}. Quality adaptation, rebuffering, start-up delay, and quality fluctuations are factors that have not been part of traditional \ac{QoE} metrics for video, but that have a tremendous impact on user's perception of adaptive video streamed over a best-effort network, such as the Internet. User engagement is another important metric, which is especially of interest for content providers since it is directly related to advertising-based revenue schemes~\cite{Balachandran2013}.

\section{System description}\label{sec:system_description}

In this section we describe the streaming model and the wireless network model used in our study.

\subsection{Streaming model}

We consider a set of users $u\in U_t$ at time $t$, where each user wants to watch a video file from a library of possible files. Corresponding to the \ac{HAS} model~\cite{Stockhammer2011a}, each video file is segmented in chunks of $\tau$ seconds duration, and each segment is encoded in several representations, each representation providing a different average encoding bit rate. Further, each segment starts with a random access point of the stream, thus allowing a video client to concatenate segments from different representations during the playback. A video client sequentially issues HTTP GET or GET RANGE requests to download individual segments. The meta information about available segments and representations is downloaded by the client prior to starting the streaming session, e.g., in form of an XML file, called Manifest or \ac{MPD}. An example for this technology is the open standard MPEG-DASH~\cite{DASH2012}. 

If a segment is not downloaded until its playback deadline, a buffer underrun occurs, typically followed by a rebuffering period, during which the playback is halted. In addition, a video client might initially delay the start of the playback in order to prebuffer a certain amount of video (start-up delay).

The goal of a video client is to maintain a high \ac{QoE} by adapting the media bit rate to the available network throughput. Among the main factors influencing the \ac{QoE} are the (1) amount of time spent in rebuffering and the distribution of rebuffering event over the streaming session, (2) the average and minimum video quality during the streaming session, (3) the number and distribution of quality changes, and (4) the duration of the start-up delay. The exact nature of \ac{QoE} for adaptive streaming is an ongoing research problem, see also Section~\ref{sec:related_work}.

\subsection{Wireless network model} 

We consider a wireless network with multiple user nodes and base stations sharing the wireless resource. The base stations either store cached video files, or have a (wired or wireless) connection to some video server, which we assume not to be the system bottleneck. In general, some user nodes might also participate in distribution of video data; therefore, we generally refer to base stations and user nodes serving as streaming sources as "helpers".

The network is defined by a bipartite graph $G_t=(U_t, H, E_t)$, where $t$ is the time index, $U_t$ denotes the set of users, $H$ denotes the set of helpers, and $E_t$ contains edges for all pairs $(h, u)\in H\times U_t$ for which there exists a potential transmission link between $h$ and $u$ at time $t$. We denote by $\mathcal{N}_t(u)=\left\{h\in H:(h,u)\in E_t\wedge c_{hu}(t)\geq c_{\min}\right\}$ the neighborhood of user $u$ at time $t$. Similarly, $\mathcal{N}_t(h)=\left\{u\in U_t:(h,u)\in E_t\wedge c_{hu}(t)\geq c_{\min}\right\}$. In the following, we omit the index $t$ to simplify the notation.

Although the approach presented in this paper works with different kinds of wireless network models, we will focus on the wireless network model used in~\cite{Bethanabhotla2015}, as described in the following. The wireless channel for each link $(h,u)$ is modeled as a frequency and time selective underspread fading channel~\cite{Tse2005}. Using \ac{OFDM}, the channel can be converted into a set of parallel narrowband sub-channels in the frequency domain (subcarriers), each of which is time-selective with a certain fading channel coherence time. The small-scale Rayleigh fading channel coefficients can be considered as constant over time-frequency "tiles" spanning blocks of adjacent subcarriers in the frequency domain and blocks of \ac{OFDM} symbols in the time domain. 


We assume that transmission scheduling decisions are made according to the underlying PHY and MAC air interface specifications. For example, in an LTE setting, users are scheduled over resource blocks which are tiles of 7 OFDM symbols x 12 subcarriers, spanning (for most common channel models) a single fading state in the time-frequency domain. 

Nevertheless, it is unreasonable to assume that the rate can be adapted on each single resource block. As a matter of fact, rate adaptation is performed according to some long-term statistics that capture the large-scale effects of propagation, such as distance dependent path loss and interference power. It is well-known that with a combination of rate adaptation and hybrid ARQ, a link rate given as the average with respect to the small-scale fading of the instantaneous rate function, for given large scale pathloss coefficients and interference power, can be achieved. This averaging effect with respect to the small scale fading is even more true when massive MIMO transmission is used, thanks to the fact that, due to the large dimensional channel vectors, the rate performance tends to become deterministic~\cite{Rusek2012, Huh2012}. Therefore, in our treatment we shall use the link rate function given by eq.~\eqref{eq:capacity_formula}.




We assume that the helpers transmit at constant power, and that the small-cell network makes use of universal frequency reuse, that is, the whole system bandwidth is used by all helper nodes. We further assume that every user $u$, when decoding a transmission from a particular helper $h\in\mathcal{N}(u)$ treats inter-cell interference as noise. Under these system assumptions, the maximum achievable rate during the scheduling timeslot $t_i$ for link $(h,u)\in E$ is given by
\begin{equation}
\label{eq:capacity_formula}
\begin{aligned}
&c_{hu}(t_i)=\\
&W\cdot\mathbb{E}\left[\log{\left(1+\frac{P_h g_{hu}(t_i)\left|s_{hu}\right|^2}{1+\sum_{h'\in\mathcal{N}(u)\setminus\{h\}}P_{h'} g_{h'u}(t_i)\left|s_{h'u}\right|^2}\right)}\right]\,,
\end{aligned}
\end{equation}
where $P_h$ is the transmit power of helper $h$, $s_{hu}$ is the small-scale fading gain from helper $h$ to user $u$, $g_{hu}(t_i)$ is the slow fading gain (path loss) from helper $h$ to user $u$, and $W$ is system bandwidth.

Consistently with current wireless standards, we consider the case of intra-cell orthogonal access. This means that each helper $h$ serves its neighboring users $u\in\mathcal{N}(h)$ using orthogonal FDMA/TDMA. We denote by $\alpha_{hu}(t_i)$ the fraction of time/spectrum helper $h$ uses to serve user $u$ in scheduling timeslot $t_i$. It must hold $\sum_{u\in\mathcal{N}(h)}\alpha_{hu}(t_i)\leq 1$ for all $h\in H$. The throughput of user $u$ in timeslot $t_i$ is then given by $c_u(t_i)=\sum_{h\in\mathcal{N}(u)}\alpha_{hu}(t_i)c_{hu}(t_i)$. The underlying assumption, which makes this rate region achievable, is that helper $h$ is aware of the slowly varying path loss coefficients $g_{hu}(t_i)$ for all $u\in\mathcal{N}(h)$, such that rate adaptation is possible. This is consistent with rate adaptation schemes currently implemented in \ac{LTE} and IEEE 802.11~\cite{Sesia2009,Ong2011,Molisch2010}.

We assume that all helpers are connected to a cenralized network controller, which performs the scheduling decisions.



\section{Joint scheduling and quality selection}\label{sec:approach}

In this section we present a partially distributed mechanism 
consisting of two components: video quality selection, performed by each video client independently and asynchronously, 
and link transmission scheduling, performed by a centralized network controller at equidistant time intervals.


\subsection{General idea}

Consider a dynamic system described by the following equations
\begin{equation}\label{eq:dynamic_system}
\begin{aligned}
\dot{x}(t)&=f_1\left(x(t)\right)+z(t)\\
y(t)&=f_2\left(x(t)\right)\,, 
\end{aligned}
\end{equation}
where $x(t)$ is system state at time $t$, $z(t)$ is input to the system, $y(t)$ is system's output, and $f_1(\cdot)$ and $f_2(\cdot)$ are some functions. Assume that we want to stabilize system's output $y(t)$ around a certain target value $y^*$. For many systems, this can be achieved in an efficient way by setting their input using a \ac{PID} controller, that is,
\begin{equation*}
z(t) = K_p e(t) + K_d\dot{e}(t) + K_i\int_{0}^{t}e(\xi)d\xi\,,
\end{equation*}
where the error function $e(t)=y(t)-y^*$ is the deviation of system output $y(t)$ from the target value $y^*$, and $K_p$, $K_d$, and $K_i$ are controller parameters.

In our case, we define system state as the vector of (playback) buffer levels $x(t)=\big(x_u(t),u\in U\big)$. The buffer level of a user is defined as the consecutive amount of video content, measured in seconds of playback time, stored in user's playback buffer, starting from the current playback point. We denote by $r_u(t)$ the encoding bit rate, or media bit rate, of the video content downloaded by user $u$ at time $t$, measured in bit/s. Further, $c_u(t)$ shall denote the throughput experienced by user $u$ at time $t$, in bit/s. With this notation, the buffer level dynamics are represented by the following system of equations:
\begin{equation}\label{eq:buffer_dynamics}
\dot{x}_u(t)=\frac{c_u(t)}{r_u(t)}-1,\;\;\forall u\in U\,,
\end{equation}
where $\frac{c_u(t)}{r_u(t)}$ is the rate at which the buffer is filled by the arriving video data, and $-1$ is the rate at which the buffer is emptied by the playback.

Particularizing the general dynamic system equation~\eqref{eq:dynamic_system} to our specific case defined in~\eqref{eq:buffer_dynamics}, we obtain $f_1\left(x(t)\right)\equiv -1$, $f_2\left(x(t)\right)=x(t)$, and $z_u(t)=\frac{c_u(t)}{r_u(t)}$. That is, our goal is to stabilize users' buffer levels $x(t)$ around a target value $x^*$ by controlling $\frac{c_u(t)}{r_u(t)}$. In the following, to simplify notation and w.l.o.g., we set $x^*=0$.

In order to apply \ac{PID} control to buffer level dynamics~\eqref{eq:buffer_dynamics}, we have to set system's input $\frac{c_u(t)}{r_u(t)}$ to the controller output
\begin{equation}\label{eq:pid_equality}
\frac{c_u(t)}{r_u(t)}=\underbrace{K_{p,u} x(t) + K_{d,u}\dot{x}(t) + K_{i,u}\int_{0}^{t}x_u(\xi)d\xi}_{=:\omega_u(t)}
\end{equation}
for each user $u\in U$. Consequently, the buffer level dynamics will obey the following differential equation
\begin{equation}\label{eq:system_x}
\dot{x}_u(t)=\omega_u(t) -1,\;\forall u\in U\,.
\end{equation}

In the following we present a basic result on stability of system~\eqref{eq:system_x}. Proposition~\ref{prop:1} ensures that it converges to the target buffer level $x^*=0$ for arbitrary initial values.
\begin{proposition}\label{prop:1}
System~\eqref{eq:system_x} is globally asymptotically stable if and only if $K_{d,u}\neq 1$, $\frac{K_{p,u}}{1-K_{d,u}}<0$ and $\frac{K_{i,u}}{1-K_{d,u}}<0$.
\end{proposition}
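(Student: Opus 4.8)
The plan is to collapse the integro-differential closed loop into a finite-dimensional linear time-invariant system and then read off stability from the Routh--Hurwitz criterion. First I would substitute the controller output $\omega_u(t)$ of~\eqref{eq:pid_equality} into~\eqref{eq:system_x} and move the derivative-feedback term to the left, obtaining
\begin{equation*}
(1-K_{d,u})\,\dot{x}_u(t)=K_{p,u}\,x_u(t)+K_{i,u}\!\int_0^t x_u(\xi)\,d\xi-1 .
\end{equation*}
This already reveals why $K_{d,u}\neq 1$ must be required: if $K_{d,u}=1$ the left-hand side vanishes and the relation degenerates into an algebraic constraint linking $x_u$ and its integral, which pins the admissible initial buffer level to a single value (or renders the relation outright inconsistent) rather than allowing an arbitrary initial condition, so global stability for all initial states cannot hold. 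Under the hypothesis $K_{d,u}\neq 1$ I would divide through by $1-K_{d,u}$ to reach standard form.

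Next I would augment the state with the integrator variable $s_u(t):=\int_0^t x_u(\xi)\,d\xi$, so that $\dot{s}_u=x_u$, and rewrite the dynamics as the planar linear system
\begin{equation*}
\dot{x}_u=\frac{K_{p,u}}{1-K_{d,u}}\,x_u+\frac{K_{i,u}}{1-K_{d,u}}\,s_u-\frac{1}{1-K_{d,u}},\qquad \dot{s}_u=x_u ,
\end{equation*}
that is, $\dot{\mathbf{w}}=A\,\mathbf{w}+b$ with $\mathbf{w}=(x_u,s_u)^{\!\top}$, constant forcing $b$, and
\begin{equation*}
A=\begin{pmatrix}\dfrac{K_{p,u}}{1-K_{d,u}} & \dfrac{K_{i,u}}{1-K_{d,u}}\\[1ex] 1 & 0\end{pmatrix}.
\end{equation*}
The hypothesis $K_{i,u}/(1-K_{d,u})<0$ forces $K_{i,u}\neq 0$, so $A$ is nonsingular and the system has the unique equilibrium $(x_u^\ast,s_u^\ast)=(0,1/K_{i,u})$, whose buffer coordinate is exactly the target $x^\ast=0$. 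Since the system is linear, global asymptotic convergence of $x_u(t)$ to $x^\ast$ from every initial condition is equivalent to $A$ being Hurwitz, i.e. to both eigenvalues having strictly negative real part.

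Finally I would compute the characteristic polynomial
\begin{equation*}
\det(\lambda I-A)=\lambda^2-\frac{K_{p,u}}{1-K_{d,u}}\,\lambda-\frac{K_{i,u}}{1-K_{d,u}},
\end{equation*}
and invoke the degree-two Routh--Hurwitz test: a monic quadratic $\lambda^2+a_1\lambda+a_0$ has both roots in the open left half-plane if and only if $a_1>0$ and $a_0>0$. Here $a_1=-K_{p,u}/(1-K_{d,u})$ and $a_0=-K_{i,u}/(1-K_{d,u})$, so the two positivity requirements translate precisely into $K_{p,u}/(1-K_{d,u})<0$ and $K_{i,u}/(1-K_{d,u})<0$, which together with $K_{d,u}\neq 1$ are the stated conditions.

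The routine part is the sufficiency (\emph{if}) direction, which is standard linear-systems theory. I expect the only delicate point to be the necessity (\emph{only if}) direction: one must argue that the derived sign conditions are sharp, not merely sufficient, and separately dispose of the degenerate case $K_{d,u}=1$. The sharpness is handed to us by the exactness of the Routh--Hurwitz criterion for quadratics, whose conditions are equivalences rather than one-sided implications, and the degenerate case is ruled out by the well-posedness argument above, so no genuine obstacle remains.
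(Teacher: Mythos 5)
Your reduction coincides with the paper's: the paper also sets $y(t)=\int_0^t x(\xi)\,d\xi$, divides by $1-K_{d}$, and works with exactly your planar system $\dot{\mathbf{w}}=A\mathbf{w}+b$ with $a=K_p/(1-K_d)$, $b=K_i/(1-K_d)$, $c=1/(1-K_d)$, and it dismisses $K_d=1$ by the same degeneracy observation (your version is in fact slightly more careful here: the algebraic constraint at $t=0$ pins $x(0)=1/K_p$, or is outright inconsistent when $K_p=0$, whereas the paper asserts $x(0)=0$). Where you genuinely diverge is the final step: the paper explicitly constructs the general solution of the initial value problem, splitting into the repeated-root case $a^2+4b=0$ and the distinct-root case with $d=\sqrt{a^2+4b}$, and reads convergence off the exponents $\frac{1}{2}(a\pm d)$, showing $\mathrm{Re}(a\pm d)<0$ is equivalent to $a<0$ and $b<0$; you instead invoke the degree-two Routh--Hurwitz test on $\lambda^2-a\lambda-b$. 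Your route is shorter and cleaner; the paper's buys explicit solution formulas in terms of $x_0$, which makes the necessity direction transparent, since one sees for each initial value whether the coefficient of the non-decaying exponential survives.

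That points to the one real gap in your write-up: the claim that ``since the system is linear, global asymptotic convergence of $x_u(t)$ to $x^*$ from every initial condition is equivalent to $A$ being Hurwitz'' is not the standard textbook equivalence, for two reasons. First, only the first component $x_u$ is required to converge, not the full state $(x_u,s_u)$; second, the admissible initial conditions are not all of $\mathbb{R}^2$ but only the line $s_u(0)=0$, because the integral in~\eqref{eq:pid_equality} starts at zero. The implication Hurwitz $\Rightarrow$ convergence is fine, but the converse needs an argument, and Routh--Hurwitz exactness (a statement about polynomials) does not supply it. It can be closed in a few lines: convergence for two distinct values of $x_0$ gives, by linearity, $e_1^{\top}e^{At}v\rightarrow 0$ for a basis of $\mathbb{R}^2$, and since the pair $\left(A,e_1^{\top}\right)$ is observable whenever $b\neq 0$ (PBH: no eigenvector of $A$ lies in $\ker e_1^{\top}$, as the observability matrix has determinant $b$), every eigenvalue must lie in the open left half-plane; the residual case $b=0$, where $A$ is singular and no equilibrium with $x$-component $0$ exists, must be handled directly (there $\dot{x}=ax-c$, so $x(t)$ tends to $c/a\neq 0$ or diverges). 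With this supplement --- which is precisely what the paper's explicit solution construction delivers for free --- your proof is complete.
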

\begin{proof}
In the following, we omit the user index $u$.

First, we observe that for $K_d=1$ system~\eqref{eq:system_x} degenerates and is only solvable for zero initial value: $x(0)=0$.

Assuming $K_d\neq 1$ and substituting $y(t)=\int_0^t x(\xi)d\xi$, we transform~\eqref{eq:system_x} into a system of linear differential equations 
\begin{equation*}
\begin{bmatrix}\dot{x}(t)\\\dot{y}(t)\end{bmatrix}=\begin{bmatrix}a&b\\1&0\end{bmatrix}
\begin{bmatrix}x(t)\\y(t)\end{bmatrix}-\begin{bmatrix}c\\0\end{bmatrix}\,,
\end{equation*}
with $a=\frac{K_p}{1-K_d}$, $b=\frac{K_i}{1-K_d}$, and $c=\frac{1}{1-K_d}$.

We proof the claim by explicitely constructing the general solution. We distinguish two cases. Case I: $a^2+4b=0$. In this case, the solution for the initial value problem $x(0)=x_0$ is given by
\begin{equation*}
x(t)=\big(x_0+(0.5x_0 - 1)at\big) e^{\frac{1}{2}at}\,.
\end{equation*}
It converges to 0 for $t\rightarrow\infty$ if and only if $a<0$.
For case II: $a^2+4b\neq 0$, we obtain
\begin{align*}
&x(t)=\\
&\frac{dx_0-ax_0+2c}{2d}e^{0.5(a-d)t}+\frac{dx_0+ax_0-2c}{2d}e^{0.5(a+d)t}\,,
\end{align*}
with $d=\sqrt{a^2+4b}$. It converges to 0 for $t\rightarrow\infty$ if and only if $Re(a+d)<0$ and $Re(a-d)<0$, where $Re(\cdot)$ denotes the real part of a complex number. This, however, is equivalent to $a<0$ and $b<0$, proving the claim.
\end{proof}

In real deployment scenarios, system's input $\frac{c_u(t)}{r_u(t)}$ cannot always be set according to~\eqref{eq:pid_equality} due to several reasons that we list in the following, along with references to sections where we address them.

\begin{itemize}
\item
The range of feasible values for the left-hand side of~\eqref{eq:pid_equality} is bounded by $\left[0, \frac{c_{u,\max}}{r_{u,\min}}\right]$, where $c_{u,\max}$ is the maximum throughput that can be allocated to user $u$ by the network, and $r_{u,\max}$ is the smallest available media bit rate. (The lower bound is attained if the download is paused. The upper bound is attained when the user downloads lowest video quality at maximum available throughput.) This means that controller gain values outside of this region cannot be applied to the system. Such a controller with a limited gain is called saturated. It might become unstable due to the problem called integral windup. We address this issue in Section~\ref{sec:integral_windup}.
\item
In a real deployment scenario, the system we consider is a sampled and distributed system. Both $c_u(t)$ and $r_u(t)$ cannot be adapted in continuous time. They can only be modified at certain time instants, and after that they are kept constant for a certain period of time. Moreover, both transmission scheduling and quality selection take place at different time instants, since quality selection takes place once per video segment, while transmission scheduling takes place once per scheduling time slot. We address this issue in Section~\ref{sec:discretization}.
\item
Lastly, stabilizing buffer level around a target value is not enough to ensure high \ac{QoE}. It primarily aims at avoiding buffer underruns, but, in addition, we also would like to avoid excessive video quality fluctuations and unfairness among individual users. From the perspective of buffer level stability, assigning a user low throughput and low video quality stabilizes his buffer level as good as assigning him high throughput and high video quality, whereas from the perspective of \ac{QoE} the second situation is clearly preferred (see, e.g.,~\cite{Yim2011} for the impact of quality fluctuations on \ac{QoE}). We will address these issues in Sections~\ref{sec:quality_selection} and~\ref{sec:transmission_scheduling}.
\end{itemize}

Finally, it is worth noting that it might make sense to define different target levels $x_u^*$ for different users, in order to account for their individual mobility patterns, link statistics, and \ac{QoE} expectations. We may imagine a specific per-user adaptation, e.g. at application layer, that acts on the control parameter $x_u^*$. In the present work we do not address this outer control, and assume a common target $x^*$ for all users.

Also note that in general, there are several reasons for keeping $x^*$ at a reasonably low level. Especially in the case of short videos, if a user prematurely quits the streaming session, the bandwidth used to download video data remaining in the buffer would have been wasted, engendering costs for content providers and network operators~\cite{Finamore2011,Chen2013f}. Further, if a user downloads certain parts of a video in a low quality, e.g. due to poor network throughput, he might partially discard this data when network throughput increases, leading to a waste of bandwidth as in the previous case. Finally, in the case of a live stream, video content becomes available incrementally, imposing a trade-off between the value of $x^*$ and the viewing delay.

\subsection{Integral windup}\label{sec:integral_windup}

As mentioned in the previous section, in practice, it is not always possible to set control variables $c_u(t)$ and $r_u(t)$ such that equality~\eqref{eq:pid_equality} is satisfied, due to limited link rate on the one hand, and a limited set of available media bit rates on the other hand. Our controller becomes a saturated controller.

Saturated \ac{PID} controllers have been subject of intensive research in the last decade. Partially, the attention has been motivated by control of robotic manipulators. One issue with saturated controllers is the so-called integral windup. For large deviations of $x(t)$ from its target value, the unsaturated controller would apply a high positive or negative gain to bring $x(t)$ back to $x^*$. Due to the saturation, however, only a smaller gain can be applied. Thus, it takes more time to bring the state back to the target value. During this time, however, the error integral obtains a larger value than it would have had otherwise. The result is a higher overshoot and oscillations, leading to potential instability.

To formalize the notion of saturation, we use the following notation
\begin{equation*}
\big[x\big]_{x_{\min}}^{x_{\max}}=\begin{cases}
x_{\min}\;\;\text{for}\;\;x\leq x_{\min}\\
x_{\max}\;\;\text{for}\;\;x\geq x_{\max}\\
x\;\;\text{otherwise}
\end{cases}\,.
\end{equation*}
With this definition, the saturated, and thus more realistic, version of system~\eqref{eq:system_x} can be written as
\begin{equation}\label{eq:prop_2}
\dot{x}_u(t)=\big[\omega_u(t)\big]_0^{g_{\max}}-1,\;\forall u\in U\,,
\end{equation}
with $g_{\max}>1$, and $\omega_u(t)$ as defined in~\eqref{eq:pid_equality}.

The following proposition, which leverages ideas and results from~\cite{Alvarez-Ramirez2003} and~\cite{Hoppensteadt1974}, shows that for small enough $\lvert K_{i,u}\rvert$, saturated system~\eqref{eq:prop_2} retains its global asymptotic stability property.
\begin{proposition}\label{prop:2}
Assume that conditions of Proposition~\ref{prop:1} are fulfilled. Then, for every set of initial values $\left[x_{\min},x_{\max}\right]$ with $x_{\min}\leq 0\leq x_{\max}$ there exists a $\tilde{K}_{i,u}>0$ such that for $\lvert K_{i,u}\rvert<\tilde{K}_{i,u}$ and $x(0)\in\left[x_{\min},x_{\max}\right]$ the solution of the initial value problem~\eqref{eq:prop_2} converges to $0$ for $t\rightarrow\infty$. This property is sometimes called \emph{semi-global practical stability}.
\end{proposition}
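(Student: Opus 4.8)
The plan is to read~\eqref{eq:prop_2} as a singularly perturbed, two-time-scale system in which the integral action is the slow dynamics and the buffer level is fast. Reintroducing $y(t)=\int_0^t x(\xi)\,d\xi$ as in the proof of Proposition~\ref{prop:1}, and passing to the rescaled integral state $v(t)=K_{i,u}\,y(t)$, the dynamics become $\dot x=\big[K_{p,u}x+K_{d,u}\dot x+v\big]_0^{g_{\max}}-1$ together with $\dot v=K_{i,u}\,x$. For $\lvert K_{i,u}\rvert$ small the $v$-equation evolves slowly while $x$ is fast, which is exactly the regime of Tikhonov/Hoppensteadt singular-perturbation theory. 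The rescaling is essential: the true integral equilibrium $y^\ast=1/K_{i,u}$ escapes to infinity as $K_{i,u}\to 0$, whereas $v^\ast=1$ stays bounded. Throughout I would work in the well-posed, self-correcting regime $K_{d,u}<1$, in which Proposition~\ref{prop:1} forces $K_{p,u}<0$ and $K_{i,u}<0$.

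First I would settle well-posedness, since $\dot x$ appears implicitly inside the saturation. Viewing the defining relation as an equation for $\dot x$, the map $\dot x\mapsto\big[K_{p,u}x+K_{d,u}\dot x+v\big]_0^{g_{\max}}-1-\dot x$ has slope $K_{d,u}-1<0$ in the linear band and $-1$ in the saturated bands, hence is strictly decreasing when $K_{d,u}<1$; therefore $\dot x$ is a single-valued, Lipschitz function of $(x,v)$. Next comes the boundary-layer analysis: freezing $v$, the saturated proportional-derivative subsystem has the unique equilibrium $h(v)=(1-v)/K_{p,u}$, which sits in the interior of the linear band because the saturated argument there equals $1\in(0,g_{\max})$. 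Its global asymptotic stability, uniform in $v$ over the relevant compact range, follows from the self-correcting nature of the saturation: if the argument exceeds $g_{\max}$ then $\dot x=g_{\max}-1>0$, so $x$ grows and, since $K_{p,u}<0$, the argument is driven back into the band (symmetrically at the lower limit), while inside the band the dynamics are linear with rate $a=K_{p,u}/(1-K_{d,u})<0$.

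The reduced system is obtained by substituting $x=h(v)$ into $\dot v=K_{i,u}x$, giving $\dot v=-(K_{i,u}/K_{p,u})(v-1)$, which is exponentially stable to $v^\ast=1$ since $K_{i,u}/K_{p,u}>0$, with associated buffer value $h(1)=0$, the desired target. With uniform stability of the boundary layer, stability of the reduced system, and smoothness of $h$ all verified, I would invoke the infinite-horizon singular-perturbation theorem of~\cite{Hoppensteadt1974}, following the saturated-PID template of~\cite{Alvarez-Ramirez2003}, to conclude that there is a $\tilde{K}_{i,u}>0$, depending on the compact initial set $[x_{\min},x_{\max}]$, such that for $\lvert K_{i,u}\rvert<\tilde{K}_{i,u}$ every trajectory starting in that set remains bounded, is attracted to the slow manifold, and converges to $(x,v)=(0,1)$, i.e.\ $x(t)\to 0$. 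The dependence of $\tilde{K}_{i,u}$ on the initial set is precisely the semi-global character asserted in the statement.

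I expect the main obstacle to be the rigorous infinite-interval singular-perturbation step in combination with the saturation in the fast subsystem: one must establish uniform-in-$v$ global attractivity of the boundary layer despite the implicit, non-smooth saturation, and check that the Hoppensteadt hypotheses genuinely hold on $[0,\infty)$ rather than only on compact time windows, keeping the error estimates uniform as $K_{i,u}\to 0$. The divergence of the unscaled equilibrium $1/K_{i,u}$ is the structural reason the conclusion can only be semi-global and practical rather than global, and controlling this behaviour uniformly in the small parameter is the delicate part of the argument.
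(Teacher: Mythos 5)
Your proposal is correct and takes essentially the same route as the paper: both recast the saturated loop as a planar system with the integral action as the slow state, treat small $\lvert K_{i,u}\rvert$ as a singular-perturbation parameter, and conclude via Theorem~3 of~\cite{Hoppensteadt1974}, following the saturated-PID template of~\cite{Alvarez-Ramirez2003}. The differences are only in bookkeeping and in how much is verified --- the paper eliminates the implicit $K_{d,u}\dot{x}$ term explicitly (yielding the shifted saturation limits $\big[a x+b\textstyle\int x\big]_{c-1}^{g_{\max}+c-1}$), shifts the equilibrium $(0,c)$ to the origin and rescales time $t'=\epsilon t$ before declaring the Hoppensteadt hypotheses ``relatively straightforward to validate,'' whereas you rescale the integral state by $K_{i,u}$ and actually carry out those checks (well-posedness of the implicit saturated ODE, uniform attractivity of the boundary layer, exponential stability of the reduced dynamics $\dot{v}=-(K_{i,u}/K_{p,u})(v-1)$); your restriction to $K_{d,u}<1$ is marginally narrower than the proposition, which formally also admits $K_{d,u}>1$ with $K_{p,u},K_{i,u}>0$, but it matches the regime used throughout the paper ($K_{d}=0$ in all experiments).
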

\begin{proof}
Throughout the proof, we will omit the user index $u$. First, we rewrite~\eqref{eq:prop_2} as
\begin{equation*}
\dot{x}(t)=\left[ax(t)+b\int_0^t x(\xi)d\xi\right]_{c-1}^{g_{\max}+c-1}-c\,,
\end{equation*}
where $a=\frac{K_p}{1-K_d}$, $b=\frac{K_i}{1-K_d}$, and $c=\frac{1}{1-K_d}$. Next, we substitute $y(t)=b\int_0^t x(\xi)d\xi$. We obtain the equivalent formulation
\begin{equation}\label{eq:prop_2_equiv}
\begin{cases}
\dot{x}(t)=\big[ax(t)+y(t)\big]_{c-1}^{g_{\max}+c-1}-c\\
\dot{y}(t)=bx(t)
\end{cases}
\end{equation}
Observe that the unique equilibrium of this system is $(0,c)$. In order to shift the equilibrium to $(0,0)$, we define $\tilde{x}(t)=x(t)-\frac{c-y(t)}{a}$ and $\tilde{y}(t)=y(t)-c$. We obtain
\begin{equation*}
\begin{cases}
\dot{\tilde{x}}(t)=\big[a\tilde{x}(t)+c\big]_{c-1}^{g_{\max}+c-1}-c+b\left(\frac{1}{a}\tilde{x}(t)-\frac{1}{a^2}\tilde{y}(t)\right)\\
\dot{\tilde{y}}(t)=b\left(\tilde{x}(t)-\frac{1}{a}\tilde{y}(t)\right)
\end{cases}
\end{equation*}
Next, we write the integral gain as $K_i=\epsilon\bar{K}_i$ and substitute the time variable $t'=\epsilon t$, where $\epsilon >0$. We define new variables $\chi(t')=\tilde{x}(t'/\epsilon)$ and $\zeta(t')=\tilde{y}(t'/\epsilon)$ to obtain a "fast" version of our system
\begin{equation}\label{eq:prop2_fast_version}
\begin{cases}
\epsilon\chi'(t')=\big[a\chi(t')+c\big]_{c-1}^{g_{\max}+c-1}-c+\epsilon\bar{b}\left(\frac{1}{a}\chi(t')-\frac{1}{a^2}\zeta(t')\right)\\
\zeta'(t')=\bar{b}\left(\chi(t')-\frac{1}{a}\zeta(t')\right)
\end{cases}
\end{equation}
where $\bar{b}=\frac{\bar{K_i}}{1-K_d}$. Observe that $(0,0)$ is the equilibrium point of~\eqref{eq:prop2_fast_version}. Further, observe that $\left(\chi(t'),\zeta(t')\right)\rightarrow(0,0)$ as $t'\rightarrow\infty$ implies that $\left(\tilde{x}(t),\tilde{y}(t)\right)\rightarrow(0,0)$ as $t\rightarrow\infty$, and thus, stability results for~\eqref{eq:prop2_fast_version} transfer to~\eqref{eq:prop_2}. Further, for $\epsilon\ll 1$, the system~\eqref{eq:prop2_fast_version} is in the form of standard singular perturbation~\cite{Hoppensteadt1974}.

Now, it is relatively straightforward to validate that the conditions of Theorem 3 in~\cite{Hoppensteadt1974} are fulfilled for system~\eqref{eq:prop2_fast_version}%
, proving the claim.
\end{proof}

Note that this result only guarantees stability if the integral coefficient $K_{i,u}$ is small enough. However, making $K_{i,u}$ smaller than necessary might have negative impact on convergence speed. In practice, it is difficult to compute the maximum value $K_{i,u}$ that still ensures stability. Therefore, in the following, we present an alternative approach to solve the problem of integral windup: conditional integration. With this approach, the value of the integral part of the controller is not allowed to exceed certain hard limits. 

In particular, we use the equivalent form~\eqref{eq:prop_2_equiv} of the saturated system~\eqref{eq:prop_2}, and apply a bound $g_{i,\text{max}}$ on the error integral. We obtain
\begin{equation}\label{eq:saturated2}
\begin{cases}
\dot{x}(t)=\big[ax(t)+y(t)\big]_{c-1}^{g_{\max}+c-1}-c\\
\dot{y}(t)=
\begin{cases}
\max\left(0,bx(t)\right)\;\text{if}\;y(t)\leq c-g_{i,\max}\\
\min\left(0,bx(t)\right)\;\text{if}\;y(t)\geq c+g_{i,\max}\\
bx(t)\;\text{otherwise}
\end{cases}
\end{cases}
\end{equation}
where $a$, $b$, and $c$ are as defined in the proof of Proposition~\ref{prop:2}. The resulting controller is then given by
\begin{equation}\label{eq:aw_control}
\frac{c_u(t)}{r_u(t)}=ax(t)+y(t)\,,
\end{equation}
with $x(t)$, $y(t)$ defined by~\eqref{eq:saturated2}. The advantage of this formulation is that it accounts for the saturated gain, and limits the value of the error integral, reducing the impact of integral windup. An analytic study of the performance of this anti-windup strategy is notably complex, forcing us to resort to a simulative evaluation, presented in Section~\ref{sec:evaluation}. 

Further potential anti-windup strategies include limiting the integral action of the controller from growing by keeping it constant whenever the controller enters saturation, adding anti-windup compensating terms to the integral action, etc. (see, e.g.,~\cite{Tarbouriech2009}).

\subsection{Sampled distributed system}\label{sec:discretization}

So far, we studied a system, where control decisions, that is, transmission scheduling and video adaptation, are made in continuous time. In practice, however, both control decisions take place at discrete time instances, while in between, the values of the control variables are fixed. In this section, we reformulate our approach to adapt it to this requirement.

The challenge here stems from the fact that while we may assume that transmission scheduling takes place regurlarly, at equidistant time intervals, quality selection can only take place when a user starts downloading a new video segment, which happens for each user independently and, in general, on a different time scale than transmission scheduling. 

Moreover, time intervals between individual segment downloads may be subject to considerable variation over time. Whenever the buffer level of a user is in equilibrium (that is, it stays around the target level for a certain period of time), the average duration of a segment download equals the duration of the segment itself. However, when buffer level is increasing or decreasing, the duration of a segment download might be subject to significant fluctuations. In addition, segment sizes might substantially deviate from representation averages, due to \ac{VBR} encoding used by modern compression technologies, causing further variations of download times.

Let us for the moment assume that both control decisions, scheduling and adaptation take place simultaneously at time instants $t_i$, $i=0,1,\ldots$ 
In order for the sampled system to have the same state as the continuous system at certain given time instants $t_i$, we need to set our control variables as follows:
\begin{equation}\label{eq:sampled_system}
\frac{c_u(t)}{r_u(t)}=\underbrace{\frac{x(t_{i+1}) - x(t_i)}{t_{i+1}-t_i}}_{=:\tilde{\omega}_u(t_i,t_{i+1})}+1\,,\;\forall t\in[t_i,t_{i+1}),\,
\end{equation}
where $x(t_i)$ is the buffer level at time $t_i$ and $x(t_{i+1})$ is computed by solving the initial value problem defined by~\eqref{eq:saturated2}.

\begin{proposition}\label{prop:4}
Assume that conditions of Proposition~\ref{prop:1} are fulfilled. Then, sampled control~\eqref{eq:sampled_system} and continuous time saturated control with anti-windup~\eqref{eq:aw_control} lead to identical system states at time instants $t_i$, $i\in\mathbb{N}$.
\end{proposition}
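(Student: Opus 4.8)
The plan is to prove the claim by induction on $i$, showing that the buffer level produced by the sampled control \eqref{eq:sampled_system} agrees at every sampling instant $t_i$ with the buffer level of the continuous-time anti-windup system \eqref{eq:saturated2}--\eqref{eq:aw_control}, which I will treat as a fixed reference trajectory $x(\cdot)$. The whole argument rests on one elementary observation: over each interval $[t_i,t_{i+1})$ the sampled control holds $c_u(t)/r_u(t)$ constant, so by the buffer dynamics \eqref{eq:buffer_dynamics} the sampled buffer level evolves at a constant rate and is therefore exactly affine in $t$ on that interval.

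First I would fix notation, writing $x^{\mathrm s}(\cdot)$ for the buffer level realized under the sampled control and reserving $x(\cdot)$ for the continuous reference solution of the initial value problem \eqref{eq:saturated2} that appears on the right-hand side of \eqref{eq:sampled_system}. The base case is immediate, since both controllers start from the same initial buffer level, giving $x^{\mathrm s}(t_0)=x(t_0)$. For the inductive step, assume $x^{\mathrm s}(t_i)=x(t_i)$. On $[t_i,t_{i+1})$ the sampled control fixes $c_u(t)/r_u(t)=\tilde\omega_u(t_i,t_{i+1})+1$, so \eqref{eq:buffer_dynamics} gives the constant rate $\dot x^{\mathrm s}(t)=\tilde\omega_u(t_i,t_{i+1})=\frac{x(t_{i+1})-x(t_i)}{t_{i+1}-t_i}$. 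Integrating from $t_i$ to $t_{i+1}$ yields $x^{\mathrm s}(t_{i+1})=x^{\mathrm s}(t_i)+\big(x(t_{i+1})-x(t_i)\big)$, and this increment telescopes against the inductive hypothesis to give $x^{\mathrm s}(t_{i+1})=x(t_{i+1})$, closing the induction.

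The one point that requires care, and which I regard as the main conceptual obstacle rather than a computational one, is the role of the integral state $y$ in the anti-windup dynamics \eqref{eq:saturated2}. Because $x^{\mathrm s}$ is only piecewise affine while the reference $x$ is the true solution, the running error integrals of the two trajectories differ strictly inside each interval; one must therefore be explicit that the value $x(t_{i+1})$ entering \eqref{eq:sampled_system} is obtained by advancing the continuous reference state $(x(t_i),y(t_i))$ through \eqref{eq:saturated2}, and \emph{not} by integrating the sampled trajectory. In effect the sampled controller maintains an internal copy of the continuous state that it propagates one sampling step at a time; the induction guarantees that the seed $x(t_i)$ of this internal model always coincides with the realized buffer level, so the two systems stay synchronized at the sampling instants even though their integral components diverge between them.

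Finally, I would add a short feasibility remark to confirm the statement is not vacuous. Since the clamping in \eqref{eq:saturated2} forces $\dot x(t)\in[-1,g_{\max}-1]$ for all $t$, the average rate $\tilde\omega_u(t_i,t_{i+1})$ over any interval also lies in $[-1,g_{\max}-1]$, whence the applied input $\tilde\omega_u(t_i,t_{i+1})+1\in[0,g_{\max}]$ respects the saturation bounds and is indeed a legitimate control value, so the sampled scheme reproducing the continuous trajectory is realizable.
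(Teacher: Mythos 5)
Your proof is correct and follows essentially the same route as the paper's (very terse) proof: substituting the sampled control \eqref{eq:sampled_system} into the buffer dynamics \eqref{eq:buffer_dynamics}, integrating the now-constant right-hand side over $[t_i,t_{i+1})$, and using the solution $x(t_{i+1})$ of the initial value problem \eqref{eq:saturated2}. Your explicit induction, the remark that the integral state $y$ must be propagated through the continuous reference model rather than recomputed from the piecewise-affine sampled trajectory, and the feasibility check $\tilde{\omega}_u+1\in[0,g_{\max}]$ are all sound elaborations of bookkeeping the paper leaves implicit.
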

\begin{proof}
The claim is proven by substituting~\eqref{eq:sampled_system} into~\eqref{eq:buffer_dynamics}, integrating the right hand side, and using $x(t_{i+1})$ that solves the initial value problem~\eqref{eq:saturated2}.
\end{proof}


In real deployment, however, $c_u(t)$ and $r_u(t)$ cannot be set simultaneously. Instead, we are dealing with a distributed system, where transmission scheduling and quality selection are performed independently from each other and at different time instants. In the following two sections, we present heuristics for controlling transmission scheduling and quality selection in a distributed way. 

\subsection{Quality selection}\label{sec:quality_selection}

In the following, we present several heuristics that complement the mechanisms presented in previous sections, so that we obtain a distributed, practically implementable approach. While this section covers the quality selection part, the following Section~\ref{sec:transmission_scheduling} covers transmission scheduling.

The idea we use to organize operation of our controller in a distributed way is the following. The network on the one hand and each individual user on the other hand shall try to maintain the equality~\eqref{eq:sampled_system} every time they adapt their respective decision variable. The network does so at the beginning of each scheduling timeslot, while each user does so when he is about to request a new video segment.

We denote by $t_{s,u}$ the time, when a user is about to start a segment download. We transform~\eqref{eq:sampled_system} to obtain the quality selection rule
\begin{equation}\label{eq:user_update}
r_u(t_{s,u})=\frac{c_u(t_{s,u})}{1+\tilde{\omega}_u(t_{s,u},t_{f,u})}\,,
\end{equation}
where $t_{f,u}$ is the time when the segment download would finish if the buffer dynamics would obey~\eqref{eq:saturated2}. It can be computed by solving $x_u(t_{f,u})=x(t_{s,u})+\tau-(t_{f,u}-t_{s,u})$, with $x_u(t)$ being the solution to the initial value problem defined by~\eqref{eq:saturated2}. 

In addition, in order to avoid excessive quality fluctuations and provide a smooth adaptation trajectory, we use exponential moving average for quality adaptation. The network throughput $c_u(t_{s,u})$ is approximated by a simple moving average of past throughput:
\begin{equation}\label{eq:user_update_ma}
r_u(t_{s,u})=(1-\alpha)r_u(t_{s,u}^-)+\alpha\frac{<c_u(t)>_{t\in[t_{s,u}-T,t_{s,u}]}}{1+\tilde{\omega}_u(t_{s,u},t_{f,u})}\,,
\end{equation}
where $r_u(t_{s,u}^-)$ is the previously selected video quality, $\alpha\in(0,1)$, and $T$ is a configuration parameter denoting the time period over which we compute the throughput average.

Since only a finite set of media bit rates is available, we round $r_u(t_{s,u})$ down to the next available value: $\tilde{r}_u(t_{s,u})=\max\left\{r\in R_u\;|\;r\leq r_u(t_{s,u})\right\}$. If $\left\{r\in R_u\;|\;r\leq r_u(t_{s,u})\right\}$ is empty, the lowest available media bit rate is selected.

In order to account for the difference between the target media bit rate $r_u(t_{s,u})$ and the actually selected media bit rate $\tilde{r}_u(t_{s,u})\leq r_u(t_{s,u})$, we introduce a delay after the download of the current segment, computed as follows. If the media bit rate $r_u(t_{s,u})$ were actually available and assuming user's throughput $c_u(t_{s,u})$ would remain constant, the buffer level after downloading the current segment would be $x(t_{s,u}) - \left(1+\frac{r_u(t_{s,u})}{c_u(t_{s,u})}\right)\tau$. Since the actually selected media bit rate is smaller, however, the buffer level after the download will be larger. Thus, in order to account for the difference in media bit rates, we delay the subsequent request until the buffer level drops below $\max\left(0,\;x(t_{s,u}) -\tau,\;x(t_{s,u}) -\left(1+\frac{r_u(t_{s,u})}{c_u(t_{s,u})}\right)\tau\right)$, where the maximum operator is an additional precaution preventing the buffer from depleting more than the duration of one segment at a time, as well as from falling below the target level 0 as a result of a delayed request.


During a particularly long period of low throughput the buffer may become empty, interrupting the playback. We call such an event a buffer underrun. In order to avoid multiple buffer underruns within a very short period of time, the client starts to play a segment only after it has been fully downloaded. This is called rebuffering. We limit the duration of the rebuffering period to the download time of one segment, that is, once we have at least one segment in the buffer, the playback is immediately restarted.

Finally, at the begin of the streaming session when no throughput information is available, the client downloads the first segment in lowest quality in order to minimize start-up delay.



\subsection{Transmission scheduling}\label{sec:transmission_scheduling}

The goal of the network is on the one hand to provide the capacities requested by the users, that is, to maintain~\eqref{eq:sampled_system}. On the other hand, it shall allocate the remaining capacity, if available, in a fair manner in order to provide high network utilization and to eventually enable users to switch to a higher video quality. In order to achieve these goals, we let the network controller solve a series of linear optimization problems. In the following, $\tilde{r}_u(t_{s,u})$ and $\tilde{\omega}_u(t_{s,u},t_{f,u})$ are values computed by user $u$ for the last segment requested prior to scheduling timeslot $t_i$ and communicated to the network as part of the download request. Further, we denote by $\rho_u(t_{s,u})=\tilde{r}_u(t_{s,u})(1+\tilde{\omega}_u(t_{s,u},t_{f,u}))$ the throughput demand of user $u$.


First, the network controller maximizes the minimum fraction of user's throughput relative to his demand, similar to the well-known maximum concurrent flow problem. At the same time, the controller tries to improve fairness by encouraging users streaming at low video quality to switch to higher quality, if there are sufficient network resources. This is achieved by artificially raising the demand $\rho_u(t_{s,u})$ of the 10\% of the users with lowest demand to the 10th percentile across all users.
We obtain the following optimization problem:
\begin{align}
{\max}\;\;\underset{u\in U}{\min}\quad 
&\frac{c_u(t_i)}{\max\left(\rho_u(t_{s,u}),\,\rho_{10}(t_i)\right)} \tag{\bf TS1}\label{TS1} \\
\text{s.t.}\quad 
&\sum_{u\in\mathcal{N}(h)}\alpha_{hu}\leq 1\,,\quad\forall h\in H \tag{C1} \\
&\alpha_{uh}\geq 0\,,\quad\forall h\in H,\;\forall u\in U\,, \tag{C2}
\end{align}
where $\rho_{10}(t_i)$ is the 10-percentile of $\left(\rho_u(t_{s,u}), u\in U\right)$. 
Recall that $c_u(t_i)=\sum_{h\in\mathcal{N}(u)}\alpha_{hu}c_{hu}(t_i)$. 
We denote the optimum value of~\eqref{TS1} by $\vartheta^*$.

In the second step, the network controller fixes the minimum relative allocated capacity to its optimum value $\vartheta^*$, and maximizes the minimum allocated capacity.
\begin{align}
{\max}\;\;\underset{u\in U}{\min}\quad 
&c_u(t_i) \tag{\bf TS2}\label{TS2}\\
\text{s.t.}\quad 
&\vartheta^*\leq\frac{c_u(t_i)}{\max\left(\rho_u(t_{s,u}),\,\rho_{10}(t_i)\right)},\,\;\forall u\in U \tag{C3}\label{C3} \\
& \text{(C1), (C2)}\notag
\end{align}
We denote the optimum value of~\eqref{TS2} by $c_{\min}^*$.

Finally, it fixes the minimum allocated relative capacity $\vartheta^*$ and minimum allocated absolute capacity to $c_{\min}^*$ and maximizes the total network throughput. In order to avoid high-amplitude throughput spikes for the individual users, it limits the capacity allocated to a user to either twice the median demand across all users or twice the minimum allocated capacity $c_{\min}^*$, whichever is larger.
\begin{align*}
{\max}\quad 
&\sum_{u\in U}c_u(t_i) \tag{\bf TS3}\label{TS3}\\
\text{s.t.}\quad 
&c_u(t_i)\geq c_{\min}^*,\,\;\forall u\in U \tag{C4}\\
&c_u(t_i)\leq 2\max{\big(c_{\min}^*,\,\rho_{50}(t_i)\big)},\,\;\forall u\in U \tag{C5}\\
& \text{(C1), (C2), (C3)}\,,\notag
\end{align*}
where $\rho_{50}(t_i)$ is the median demand across all users.

Since the maximum number of users per helper is limited by a technology dependent value, the number of optimization variables and constraints is $\mathcal{O}\left(\max{\left(|H|,|U|\right)}\right)$, which can be handled very efficiently by modern linear program solvers even for large networks. Also note that most solvers allow to iteratively modify and reoptimize a model, which reduces the complexity of subsequent optimizations such as we have here.




\section{Evaluation}\label{sec:evaluation}

In the following, we present our evaluation setting and results. All results were obtained by means of simulations. The simulation code was written in C++, we used Gurobi~\cite{Gurobi} to solve optimization problems, and we used odeint~\cite{odeint} to solve differential equations.

Section~\ref{sec:performance_metrics} describes the performance metrics. 
Section~\ref{sec:evaluation_settings} describes the general setting, such as network and video parameters.
Section~\ref{sec:experimental_design} elaborates on the goal and design of the individual experiments. 
Finally, Section~\ref{sec:evaluation_results} presents evaluation results.

\subsection{Performance metrics}\label{sec:performance_metrics}

We use the following metrics to assess the performance of the proposed system.

\subsubsection{Stability} 
A well-known issue with closed-loop control systems is their potential to become unstable, leading to high-amplitude fluctuations of the system state. Although not necessarily harmful per se, instability can have a dramatic impact on other performance metrics. We evaluate stability by means of buffer level statistics, such as the maximum buffer level overshoot and the minimum buffer level of a user during a simulation run.

\subsubsection{Rebuffering duration} 
When a client's video buffer has been drained so that the next video segment does not arrive before his playback deadline, the playback must be halted. This is often refered to as a buffer underrun. A buffer underrun is followed by a rebuffering period, where the client waits until enough video data is accumulated in the buffer to resume playback. The conditions that need to be fulfilled before the playback is resumed depend on client's rebuffering strategy. In our design, we resume playback after at least one segment is completely downloaded. In our evaluation we look at the cumulative rebuffering time a user experienced during a simulation run.

\subsubsection{Prebuffering duration (start-up delay)}
At the start of a streaming session, user's video buffer is empty, so he has to wait until enough video data is downloaded to start playback. In contrast to rebuffering, a user at this state typically do not have information about network conditions. Especially when a user frequently starts a new streaming session, e.g., by switching TV channels, or when he repeatedly watches short videos, even a moderate start-up delay might severely degrade \ac{QoE} and even make the user decide not to watch the video at all.

\subsubsection{Mean media bit rate}
The mean video quality is obviously a factor that dramatically influences the overall \ac{QoE}, although studies have shown that it cannot be considered as a standalone \ac{QoE} measure for adaptive video streaming. In our evaluation, we identify mean video quality with mean media bit rate during a simulation run.

\subsubsection{Media bit rate fluctuations}
Recent studies have shown the significance of temporal quality fluctuations on the overall \ac{QoE}. As a measure of quality fluctuations we use the fraction of segments played out in a different quality than the preceding segment. That is, a quality fluctuations measure of 1\% means that one segment out of 100 was played in a different quality than its predecessor. For a segment duration of 2 seconds, e.g., this would mean one quality adaptation in 200 seconds.

Note, however, that due to significant media bit rate fluctuations within a single representation, resulting from \ac{VBR} encoding used by most modern video compression technologies, quality fluctuations cannot always be completely avoided.

\subsubsection{Media bit rate fairness}
Although a selfish user might not care about that, from the perspective of a service provider and/or system designer, a fare distribution of \ac{QoE} among clients sharing a common bandwidth resource is essential for the overall system performance. In our evaluation, we computed video quality fairness as follows. Taking the set of mean video qualities of all users in one simulation run, the fairness index is defined as the interquartile range, that is, the distance between the 0.25 and the 0.75 quantiles.

\subsection{Evaluation settings}\label{sec:evaluation_settings}

In this section we describe general settings such as video and network parameters.

\begin{figure}
\centering
\includegraphics{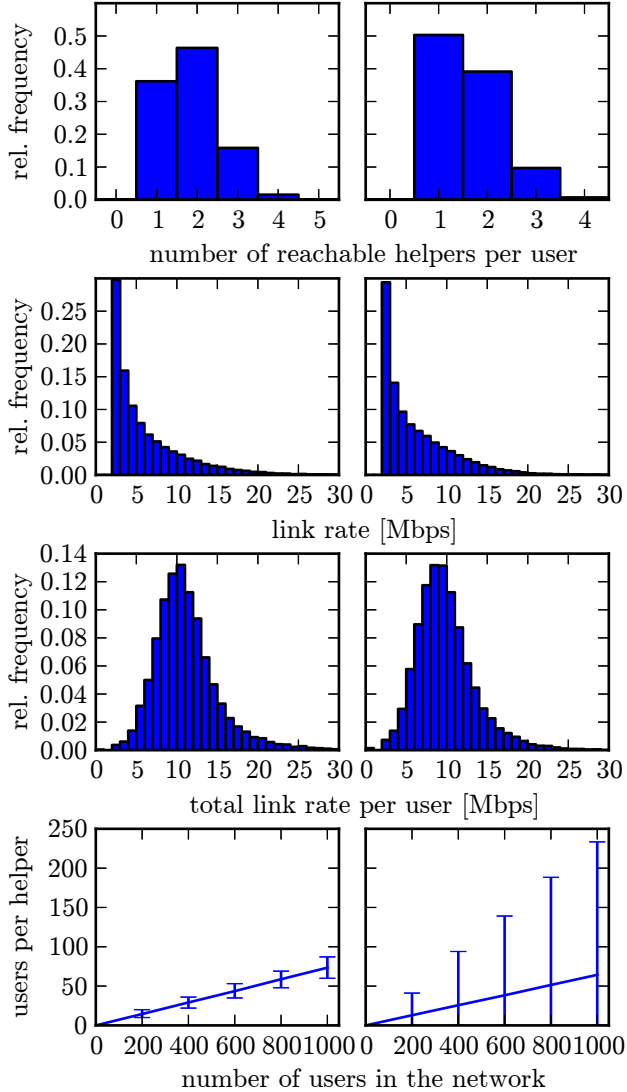}
\caption{Connectivity statistics for uniformely distributed users (left) and clustered users (right). From top to bottom: number of helpers per client; active link rates; sum of link rates per client; average number of users per helper for different total numbers of users in the network (with 10th and the 90th percentiles). See Section~\ref{sec:network_settings} for details.}
\label{fig:connectivity_statistics}
\end{figure}

\subsubsection{Video related settings}

For evaluation, we used a mix of 6 videos, contributed by the University of Klagenfurt~\cite{Lederer2012}. For each video, we selected 6 representations, ranging from approximately 500 kbps to 4.5 Mbps. Segment duration was 2 seconds.

The target buffer level of the video clients was set to 20 s. This value was shown to be sufficient to provide good performance in wireless networks in~\cite{Miller2013a}.

\subsubsection{Network related settings}\label{sec:network_settings}

The simulated network spans an area of 50 x 50 meters, covered by 25 helpers, distributed on a uniform grid. The duration of a scheduling time slot is set to 10 ms.

The path loss coefficients $g_{hu}(t)$ between helper $h$ and user $u$ are based on the WINNER II channel model~\cite{Kyosti2007}:
\begin{equation*}
g_{hu}(t)=10^{-0.1\text{PL}(d_{hu}(t))}\,,
\end{equation*}
where $d_{hu}(t)$ is the distance from helper $h$ to user $u$ at time $t$, and where 
\begin{equation}\label{eq:winner2.2}
PL(d) = A\log{d} + B + C\log{0.25 f_0} + \chi_{\text{dB}}\,.
\end{equation}
In~\eqref{eq:winner2.2}, $d$ is expressed in meters, the carrier frequency $f_0$ in GHz, and $\chi_{\text{dB}}$ denotes a shadowing log-normal variable with variance $\sigma_{\text{dB}}^2$. The parameters $A$, $B$, $C$ and $\sigma_{\text{dB}}^2$ are scenario-dependent constants. Among the several models specified in WINNER II we chose the A1 model~\cite{Kyosti2007}, representing an indoor small-cell scenario. In this case, $3\leq d\leq 100$, and the model parameters are given by $A=18.7$, $B=46.8$, $C=20$, $\sigma_{\text{dB}}^2=9$ in \ac{LOS} condition, or $A=36.8$, $B=43.8$, $C=20$, $\sigma_{\text{dB}}^2=16$ in \ac{NLOS} condition. For distances less than 3 m, we extended the model by setting $\text{PL}(d)=\text{PL}(3)$. Each link is in \ac{LOS} or \ac{NLOS} independently and at random, with a distance-dependent probability $p_l(d)$ and $1-p_l(d)$, respectively, where
\begin{equation*}
p_l(d)=\begin{cases}
1\hspace{5.1cm}\text{if}\;d\leq 3\;\text{m}\\
1-0.9(1-(1.24-0.6\log{d})^3)^{1/3}\;\;\text{otherwise}\,.
\end{cases}
\end{equation*}
In the following experiments, $d$ is updated in every scheduling time slot, while the random components, $\chi_{\text{dB}}$ and $p_l$ are updated every 5 seconds (except when $d$ falls below $3$ m, which forces the link to switch into the \ac{LOS} mode immediately, to maintain a consistent setting). 

Finally, links with a link rate below 2 Mbps in a particular scheduling time slot are not used for transmission. We call the remaining links \emph{active} links.

To visualize the resulting network conditions, Figure~\ref{fig:connectivity_statistics} shows some connectivity statistics. The left column shows statistics for users uniformly distributed over the simulated area, while the right column illustrates the case of clustered users, where users' distance to the center of the simulated area is exponentially distributed with $\lambda=0.2\ln{4}$. The top subfigures show the histogram of the number of helpers that can serve a client at a particular location. The subfigures in the second row show the histogram of link rates over all active links. The subfigures in the third row show the histogram of the sum of link rates for a client. The bottom subfigures show the average number of users served by a helper, for different total numbers of users in the network, plus 10th and 90th percentiles.

\subsubsection{Controller parameters}

All experiments are performed with different controller parameters, and different numbers of users. 
In all experiments, $K_d$ is set to 0. This is common practice in many applications. The reason behind it is that when the derivative of the system state is estimated from sampled measurements and the measurements are noisy, the derivative action amplifies the noise, introducing additional jitter in the system variable. In the following, whenever appropriate, results are only provided for selected parameter values, to improve the readability of the paper.

For better illustration, we would like to give an intuition for the scale of the parameters. For $K_p=-0.05$, if the buffer level is below the target value by 20 seconds (as, e.g., at the beginning of a streaming session), and the integral gain is at its equilibrium value (which is 1 in our case), then the client would try to download the video at twice the playback speed. That is, in one second the client would try to download two seconds of video, which would get him closer to the target value by 1 second. For a deviation of 10 seconds, the download rate would be 150\% of the playback rate, and so on. 


All simulations were repeated 30 times.

\begin{figure}
\centering
\includegraphics{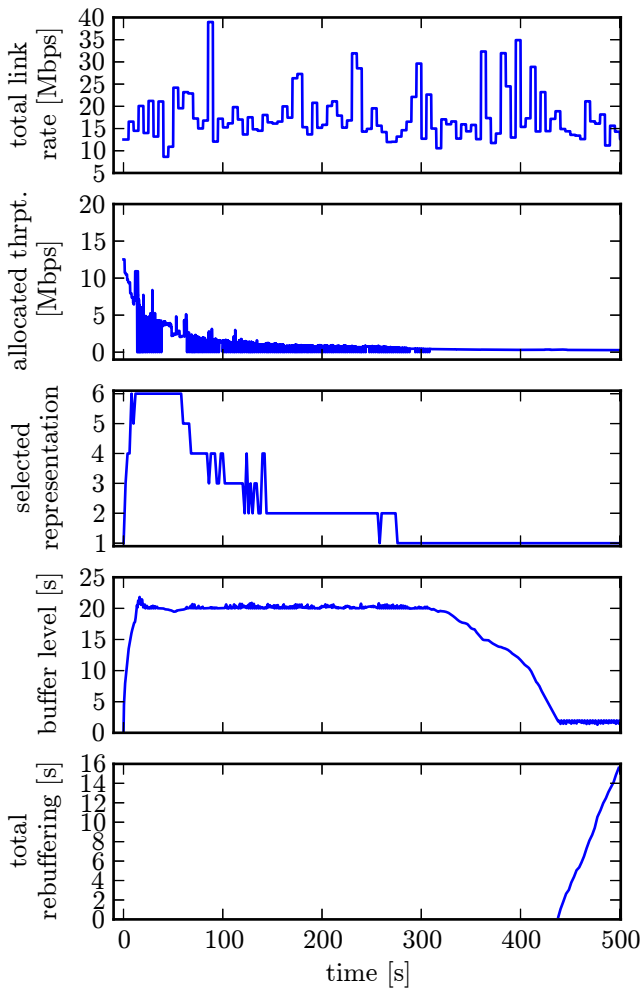}
\caption{Dynamics of one user during an example run of experiment 3. See Section~\ref{sec:evaluation_results} for details.}
\label{fig:example_users}
\end{figure}

\begin{figure*}
\centering
\subfloat[][]{
\includegraphics{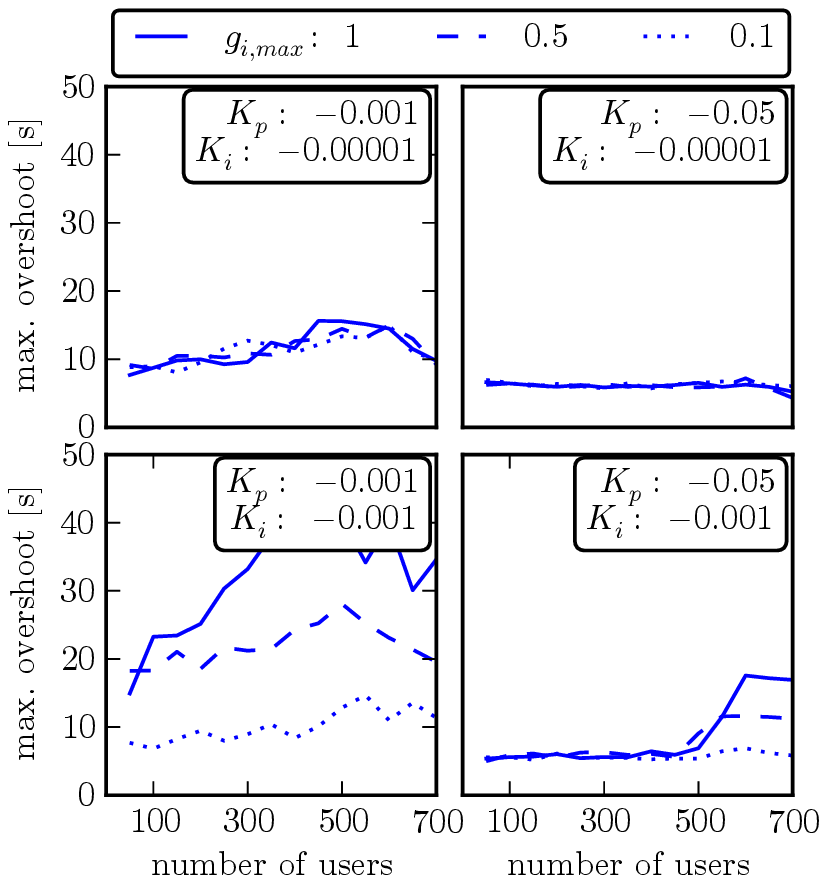}
\label{fig:bl_max_clustered}}
\subfloat[][]{
\includegraphics{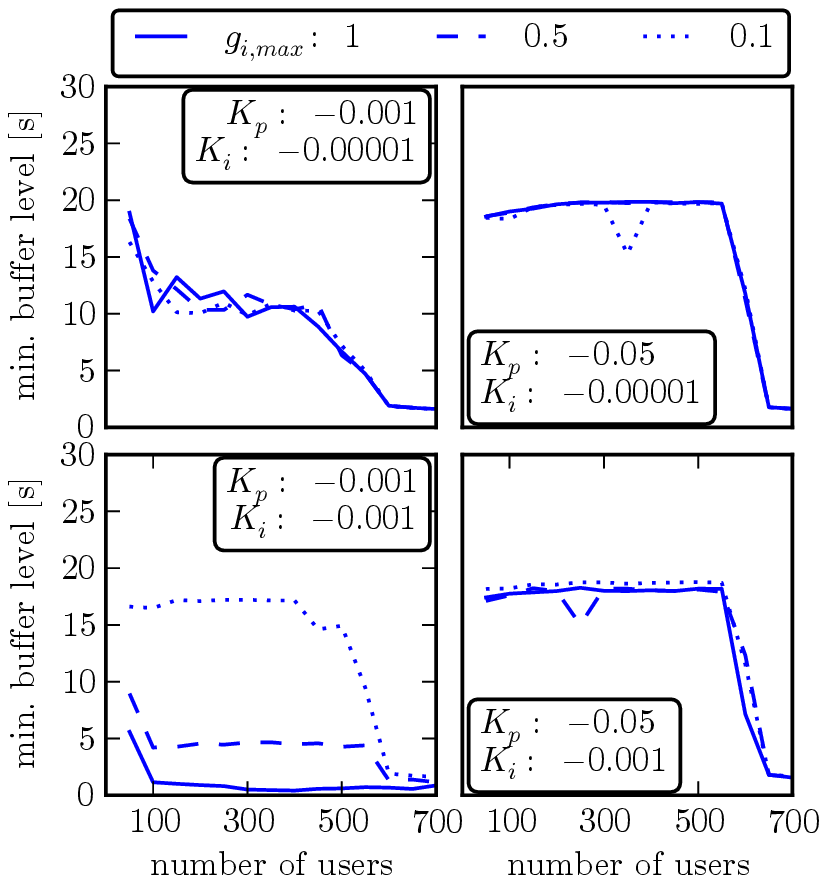}
\label{fig:bl_min_clustered}}
\caption{Stability analysis based on buffer level statistics from experiment 1, see Section~\ref{sec:stability} for details. (a) Maximum buffer level overshoot (difference between buffer level and target buffer level). (b) Minimum buffer level.}
\end{figure*}

\subsection{Experimental design}\label{sec:experimental_design}

We evaluate the proposed system in three types of experiments, each of them focusing on certain deployment scenarios, such as long-term users with no user churn, short-term users with high user churn, and a mix of short-term and long-term users. The performance is then compared with the performance of the baseline approach, described below.

Each of the experiments is executed with two different user distributions, called uniform and clustered in the following. With the former, arriving users are dropped at a random location, uniformly distributed over the simulated area. With the latter, arriving users are clustered around the center of the simulated area, with exponentially distributed distance ($\lambda=0.2\ln{4}$).

Upon joining the network, each user starts to watch a randomly selected video from a random point within the video. If he arrives at the end of the video, he continues to watch from the beginning. 


\subsubsection{Experiment 1}

The first experiment is intended to analyze system's behavior under constant load (fixed number of users) and without user churn (all users are long-term users). In particular, all users arrive during an initial arrival phase at the begin of the simulation and remain in the network until its end. Thus, after the arrival phase, the number of users in the network remains constant. 

The initial arrival phase starts at $t=0$. The users arrive at a rate of 10 users per second until a predefined number of users is reached. Then, the simulation continues with a constant number of users for another 400 seconds. 

\subsubsection{Experiment 1*}

In order to compare performance with a baseline approach, we rerun experiment 1 with the following transmission scheduling and quality selection. In each scheduling timeslot, every user is receiving data form exactly one helper, namely the one with the highest \ac{SINR}. (This is not always the closest helper, due to the random component in the pathloss.) Further, the client selects the video representation with the highest media bit rate that is still below the average throughput from the last 5 seconds.

\subsubsection{Experiment 2}

In this setting, the goal is to analyze system's performance under continuous user churn. That is, users continuously join and leave the network. As in experiment 1, there is an initial arrival phase, during which users arrive at a rate of 10 users per second until a certain number of users is reached. After that, 'old' users leave the network, while new users join it, at a rate of 2 users per second. 





\subsubsection{Experiment 3}

This experiment is intended to study system's behavior under constantly increasing load, in a deployment scenario with both short-term and long-term users. Here, new users continuously join the network at a constant rate of 2 users per second, and remain active until the end of the simulation run.

\subsection{Evaluation results}\label{sec:evaluation_results}

In this section, we present evaluation results for the four types of experiments, described in Section~\ref{sec:experimental_design}. We split the results according to the considered performance metrics: stability, rebuffering duration, start-up delay, mean media bit rate, media bit rate fluctuations, and media bit rate fairness.

Before we look into results for individual metrics, we would like to illustrate system behavior based on one example run. Figure~\ref{fig:example_users} shows dynamics of one user during a run of experiment 3, with $K_p=-0.05$, $K_i=-0.00001$, $K_d=0$, $g_{i,\max}=0.1$. The experiment runs for 500 seconds, that is, in the end there are 1000 users in the network. The plot shows the first user in the network, who starts to stream at second 0. The top subfigure shows the accumulated link rate to all neighboring helpers. The second subfigure shows network throughput allocated to the user. In scheduling timeslots that corresponds to inter-request delays, no resources are allocated to the user and thus his throughput drops to 0. The third subfigure shows the selected video representation, the fourth subfigure shows the buffer level. Finally, the bottom subfigure shows the total time spent in rebuffering.

\subsubsection{Stability}\label{sec:stability}

One of the issues that needs to be taken care of when designing a closed-loop controller is system's stability. In order to study stability, we analyze buffer level statistics of each user during each of the simulation runs. In particular, we look at maximum and minimum buffer levels, where the maximum and minimum operators are first applied to traces of individual users, then to resulting per user values, and finally to the whole set of runs for a specific configuration. In the following, we present results for experiment 1, with a uniform distribution of users across the simulated area. Results for other settings, omitted here for the lack of space, are consistent with presented findings. 
Finally, in order for the results not to be biased by system behavior during the initial arrival phase, we remove the initial arrival phase and the subsequent 100 seconds from each trace. 

Figure~\ref{fig:bl_max_clustered} shows the maximum buffer level overshoot (that is, the maximum difference between user's buffer level and the target buffer level). As expected, if the integral gain coefficient is large, as compared to the proportional gain coefficient, the system tends to become unstable. At the same time, however, we observe that the conditional integration anti-windup technique successfully combats this effect, if $g_{i,\max}$ is sufficiently small. Also the minimum buffer level, depicted in Figure~\ref{fig:bl_min_clustered}, confirms the efficiency of the conditional integration technique in avoiding system instability. 
We also studied the mean buffer level. We observed that it is within few seconds of the target buffer level, even for unstable configurations, and omit it here.


In the following, we only report results for $K_p=-0.05$, $K_i=-0.00001$, and $g_{i,\max}=0.1$, which we confirmed to be a stable configuration, and omit results for other configurations.    

\subsubsection{Rebuffering}\label{sec:rebuffering}

\begin{figure}[t!]
\centering
\includegraphics{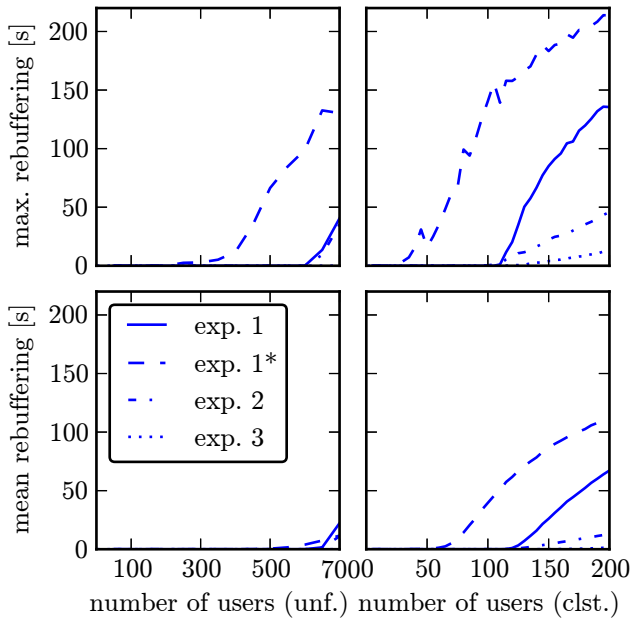}
\caption{Mean and maximum total rebuffering per user. See Section~\ref{sec:rebuffering} for details.}
\label{fig:rebuffering}
\end{figure}

One of the main factors influencing \ac{QoE} for video streaming is the amount of time a client spends rebuffering video data while the playback is halted. This happens when the playback buffer has been drained and the next segment does not arrive before its playback deadline. 

Figure~\ref{fig:rebuffering} shows mean and maximum total rebuffering per user, where the mean (maximum) is first taken over all users of a simulation run, and then over all runs performed for a setting. As in the previous section, we remove the initial arrival phase and the subsequent 100 seconds from each trace, for experiments 1, 1*, and 2 (experiment 3 does not contain an initial arrival phase).

As expected, we observe that the number of users the network can accomodate without rebuffering is higher with a uniform distribution of users. Moreover, we observe that the baseline approach results in significantly higher rebuffering values. 






\subsubsection{Prebuffering (start-up delay)}{\label{sec:prebuffering}

Another critical factor influencing \ac{QoE} is the prebuffering duration, or start-up delay. Especially in a mobile context, when users tend to watch shorter videos, long start-up delays can be very annoying.


Figure~\ref{fig:prebuffering} shows mean and maximum prebuffering delays for experiments 2 and 3, for uniform and clustered user distributions. For experiment 2, only users who arrived after the initial phase are considered. For experiment 3, to facilitate comparison, the x-axis shows number of users based on the user arrival rate of 2 users per second, instead of time.

With $K_p=-0.05$ and a target buffer level of 20 seconds, new users try to download the first segment at twice the media bit rate, that is, within one second. 
When there are few users in the system, the network can satisfy the corresponding throughput requests and even allocate some additional capacity to the individual users. When there are too many users in the system, the network cannot allocate the requested capacity for every user. When the load is moderate, many user receive exactly the requested capacity, resulting in one second prebuffering delay, as can be seen in Figure~\ref{fig:prebuffering}, left column.

\subsubsection{Mean media bit rate}\label{sec:mean_quality}

\begin{figure}[t!]
\centering
\includegraphics{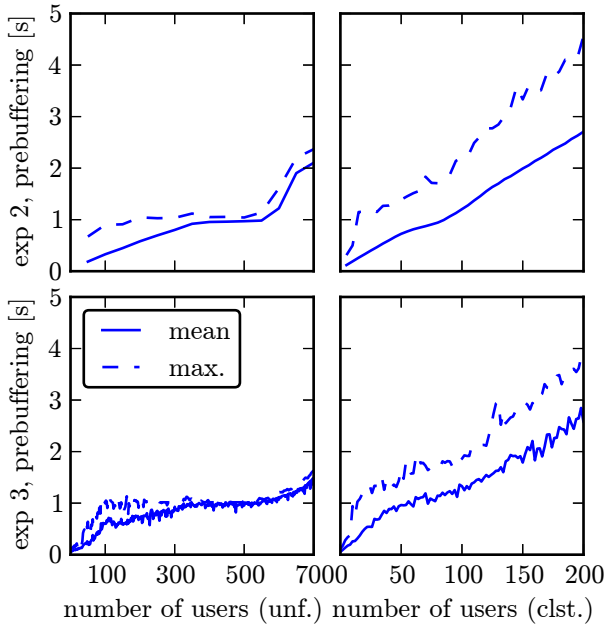}
\caption{Mean and maximum prebuffering duration (start-up delay) for experiments 2 (top) and 3 (bottom). See Section~\ref{sec:prebuffering} for details.}
\label{fig:prebuffering}
\end{figure}

Figure~\ref{fig:mean_quality} (top subfigure) illustrates mean video quality, represented by mean media bit rate, across all users. It shows results for all four experiments, for uniform (left column) and clustered (right column) user distributions. 

As in previous sections, the arrival phase was removed for experiments 1, 1*, and 2. In addition, the first 30 seconds of each user's trace were removed from experiments 2 and 3, since a user always starts to stream at lowest available quality. Because of the latter, results for experiments 2 and 3 do not include settings with less than 60 users, where no user remains in the network longer than 30 seconds.

We observe that experiment 1 and 1* exhibit comparable average media bit rates for both user distributions. With clustered user distribution, the controller driven approach offers slightly better values, especially for small numbers of users. 


\subsubsection{Media bit rate fluctuations}\label{sec:quality_fluctuations}

\begin{figure}[t!]
\centering
\includegraphics{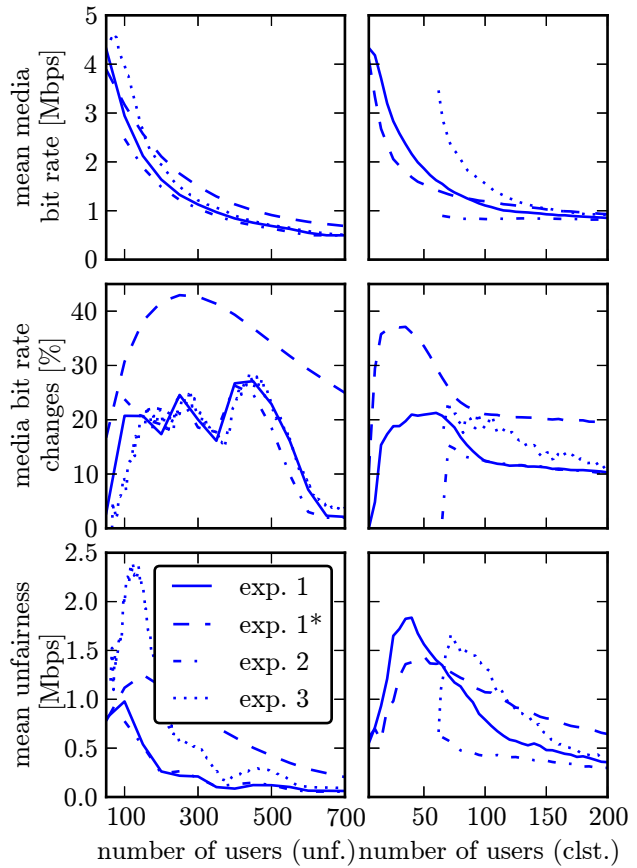}
\caption{From top to bottom: mean media bit rate in [Mbps]; mean percentage of segments played in a different representation than their predecessor (quality fluctuations indicator); mean interquartile range of media bit rate in [Mbps] (unfairness indicator). See Sections~\ref{sec:mean_quality}, \ref{sec:quality_fluctuations}, and~\ref{sec:quality_fairness} for details.}
\label{fig:mean_quality}
\end{figure}

Several studies have shown that severe quality fluctuations can dramatically degrade quality of experience even if the average quality is high. Especially in cases where network conditions may change very fast, such as in wireless networks, video clients have to implement adaptation strategies that avoid to immediately adapt video quality to dynamically varying throughput but that only react to long-term throughput changes.

The middle subfigure in Figure~\ref{fig:mean_quality} shows media bit rate fluctuations for all four experiments. 
As described in Section~\ref{sec:performance_metrics}, we measure media bit rate fluctuations as percentage of segments that were played in a different quality than their predecessor. With this definition, a value of 1\% means that an adaptation takes place every 100 segment. With a segment duration of 2 seconds, this corresponds to one adaptation in 200 seconds.

We observe that the amount of adaptations is up to twice as high with the baseline approach as with the controller driven approach.




\subsubsection{Media bit rate fairness}\label{sec:quality_fairness}

Finally, the bottom subfigure in Figure~\ref{fig:mean_quality} illustrates fairness by showing the interquartile range over all users in a network of their respective mean media bit rates. The higher the value, the lower the fairness. We observe that, except for the case of clustered users with less than 60 users in the network, the controller driven approach offers significantly better fairness than the baseline approach.



\section{Conclusion}\label{sec:conclusion}

In this study, we presented an approach for joint transmission scheduling and video quality selection in small-cell networks. The core of the approach is a \ac{PID} controller, known for its simplicity, analytical tractability, and robustness in the presence of modeling uncertainties and external disturbances. Although the controller in the studied setting is subject to saturation due to bandwidth constraints and constraints on available media bit rates, we successfully apply an anti-windup technique called conditional integration to stabilize the system. We additionally apply several heuristics that allow us to decentralize the designed mechanism, and specifically address issues that are known to affect \ac{QoE}.

We evaluated the performance of our approach under different conditions, including uniformely distributed and clustered users, as well as different mixes of short-term and long-term users, and settings with high user churn. The evaluation results showed that the developed approach performs well, outperforming the baseline approach.

Future work includes development of an outer control loop for the target buffer level that might be adjusted based on user's individual link quality statistics.

\acrodef{ADS}{Adobe Dynamic Streaming}
\acrodef{AP}{Access Point}
\acrodef{BOWL}{Berlin Open Wireless Lab}
\acrodef{CDN}{Content Delivery Network}
\acrodef{DSL}{Digital Subscriber Line}
\acrodef{MPEG-DASH}{Dynamic Adaptive Streaming over HTTP}
\acrodef{GOP}{Group of Pictures}
\acrodef{HLS}{Apple HTTP Live Streaming}
\acrodef{HTML}{Hypertext Markup Language}
\acrodef{HTTP}{Hypertext Transfer Protocol}
\acrodef{HAS}{HTTP-Based Adaptive Streaming}
\acrodef{IVP}{Initial Value Problem}
\acrodef{LAN}{Local Area Network}
\acrodef{LMI}{Linear Matrix Inequality}
\acrodef{LOS}{Line-of-Sight}
\acrodef{LQR}{Linear Quadratic Regulator}
\acrodef{LTE}{Long-Term Evolution}
\acrodef{MAC}{Media Access Control}
\acrodef{MANET}{Mobile Ad-Hoc Network}
\acrodef{MCNKP}{Multiple-Choice Nested Knapsack Problem}
\acrodef{MIMO}{Multiple Input Multiple Output}
\acrodef{MPC}{Model Predictive Control}
\acrodef{MPD}{Media Presentation Description}
\acrodef{MSS}{Microsoft SmoothStreaming}
\acrodef{NAT}{Network Address Translation}
\acrodef{NCS}{Networked Control System}
\acrodef{NLOS}{Non-Line-of-Sight}
\acrodef{NUM}{Network Utility Maximization}
\acrodef{ODE}{Ordinary Differential Equation}
\acrodef{OFDM}{Orthogonal Frequency-Division Multiplexing}
\acrodef{PI}{Proportional-Integral}
\acrodef{PID}{Proportional-Integral-Derivative}
\acrodef{QoE}{Quality of Experience}
\acrodef{QoS}{Quality of Service}
\acrodef{RTT}{Round-Trip Time}
\acrodef{TCP}{Transmission Control Protocol}
\acrodef{SDMA}{Spatial Division Multiple Access}
\acrodef{SINR}{Signal-to-Interference-plus-Noise Ratio}
\acrodef{SISO}{Single Input Single Output}
\acrodef{SMC}{Sliding Mode Control}
\acrodef{UDP}{User Datagram Protocol}
\acrodef{URL}{Uniform Resource Locator}
\acrodef{VBR}{Variable Bit Rate}
\acrodef{VoD}{Video on Demand}
\acrodef{VSC}{Variable Structure Control}
\acrodef{WLAN}{Wireless Local Area Network}
\acrodef{XAP}{Silverlight Application Package}

\bibliographystyle{IEEEtran}
\bibliography{IEEEabrv,library}

\end{document}